\documentclass[a4paper,onecolumn,11pt]{quantumarticle}
\pdfoutput=1
\usepackage[utf8]{inputenc}
\usepackage[english]{babel}
\usepackage{amsmath}
\usepackage{mathtools}
\usepackage{amsfonts}
\usepackage{amssymb}
\usepackage{amsthm}
\usepackage{graphicx}
\usepackage{braket}
\usepackage{hyperref}
\usepackage{enumerate}

\DeclareMathOperator\AR{AR}
\DeclareMathOperator\tr{tr}
\DeclarePairedDelimiter\abs\lvert\rvert

\providecommand\onlinecite[1]{\cite{#1}}
\providecommand\openone{\mathbb{1}}

\providecommand\onlinecite[1]{\cite{#1}}
\providecommand\openone{\mathbb{1}}

\newtheorem{theorem}{Theorem}
\newtheorem{lemma}[theorem]{Lemma}
\newtheorem{corollary}[theorem]{Corollary}

\begin{document}

\title{Systematic construction of quantum contextuality scenarios with rank advantage}
\author{Pascal Höhn}

\author{Zhen-Peng Xu}
\email{zhen-peng.xu@uni-siegen.de}

\author{Matthias Kleinmann}
\email{matthias.kleinmann@uni-siegen.de}
\affiliation{Naturwissenschaftlich--Technische Fakultät, Universität Siegen, Walter-Flex-Straße 3, 57068 Siegen, Germany}

\begin{abstract}
A set of quantum measurements exhibits quantum contextuality when any consistent value assignment to the measurement outcomes leads to a contradiction with quantum theory. In the original Kochen--Specker-type of argument the measurement projectors are assumed to be rays, that is, of unit rank. Only recently a contextuality scenario has been identified where state-independent contextuality requires measurements with projectors of rank two. Using the disjunctive graph product, we provide a systematic method to construct contextuality scenarios which require non-unit rank. We construct explicit examples requiring ranks greater than rank one up to rank five.
\end{abstract}

\maketitle

\section{Introduction}

Quantum contextuality is a fundamental feature of quantum measurements \cite{Quantum_Contextuality} closely connected to the incompatibility of observables \cite{xu2019necessary} and quantum nonlocality \cite{cabello2010proposal, cabello2021converting}. It has been proved to be vital for quantum advantage like in quantum computation \cite{howard2014contextuality, raussendorf2013contextuality}, quantum communication \cite{cubitt2010improving, saha2019state, gupta2022quantum}, randomness generation \cite{miller2017universal}, quantum cryptography \cite{cabello2011hybrid} and quantum key distribution \cite{barrett2005no}. At the heart of quantum contextually are projective quantum measurements for which  one cannot construct a noncontextual hidden variable model. That is, if one assumes that a hidden variable determines the outcome of all measurements and that this value assignment is consistent across the different measurements, then the resulting predictions are at variance with the predictions of quantum theory. This definition follows the works by Kochen and Specker \cite{Kochen-Specker_117_rays} as well as Yu and Oh \cite{yu2012state} and differs from the notion of contextuality introduced by Spekkens \cite{spekkens2005contextuality}.

The original proof of quantum contextuality was found by Kochen and Specker \cite{kochen1975problem}. By using a set of 117 rank-one projectors in a three-dimensional space, they established a logical contradiction for quantum measurements under the hypothesis of a noncontextual hidden variable model. Similar proofs were gradually reported in different scenarios \cite{pavivcic2019hypergraph, cabello1996bell, kernaghan1995kochen, peres1991two}. As it turns out, a Kochen--Specker-like proof can always be converted into a state-independent noncontextuality inequality \cite{yu2015proof}, which is violated by every quantum state. However, the converse is not true. The first set of projectors that is not based on a Kochen--Specker-like proof but features state-independent contextuality (SIC) consists of 13 rank-one projectors in three dimensions \cite{yu2012state}. It is important to note that contrary to the related phenomenon of quantum nonlocality, the contradiction displayed by SIC is not a feature of the quantum state, but solely of the SIC set of measurement projectors. We focus here on the case of SIC, although there are also interesting instances where contextuality is state dependent \cite{klyachko2008simple}.

SIC sets involving only projectors of rank one have led to many fruitful results over the last years \cite{cabello2016quantum, Proof_Cabellos_18_rays_minimal, cabello2021converting}. In comparison, SIC sets involving non-unit rank received only little attention  \cite{mermin1990simple, kernaghan1995kochen, toh2013kochen, toh2013state} and, to the best of our knowledge, all of those results are based on the Mermin star \cite{Mermin_1993_Peres_Mermin_Square}.  But there is no physical reason to limit the projectors in a SIC set to be only of rank one and we expect that many interesting SIC sets involve projectors of non-unit rank. Indeed, it has been recently shown that \cite{State-ind-quantum-contextuality-with-projectors-of-nonunit-rank} non-unit rank can be indispensable for certain SIC scenarios, but the proof of this fact is specifically tailored to the specific SIC scenario and does not enable us to find other SIC scenarios with non-unit rank. In this paper we present systematic constructions of SIC scenarios that require projectors of non-unit rank. Our construction method uses known SIC sets and families of graphs with ``rank advantage.'' When such graphs are suitably combined by means of the disjunctive graph product, the resulting graph requires non-unit rank for SIC. To this end we provide a generic constriction method as well as special constructions based on numerical methods.

This paper is structured as follows. In Section~\ref{sec_sic_and_connection_to_graph_theory} we discuss the graph theoretical formulation of contextuality. We turn to SIC in Section~\ref{sec_state-independent_contextuality} where we also present a semidefinite program to compute the optimal SIC ratio. In Section~\ref{sec_about_AR_graphs} we introduce the rank advantage for graphs and present two families of graphs with this property. Subsequently we discuss the disjunctive graph product and its effect on graph-theoretical concepts in Section~\ref{sec_disjunctive_graph_product}. With this tools at hand we can formulate in Section~\ref{sec_construction_using_cycle_graphs} our main theorem and construct graphs for which we can exclude the existence of a SIC sets of low rank. In Section~\ref{sec_comb_of_SIC_graph_with_AR} we present further such constructions based on numerical methods. We conclude in Section~\ref{sec_conclusion}.

\section{Contextuality and its connection to graph theory}
\label{sec_sic_and_connection_to_graph_theory}
Our work uses the graph-theoretic framework of quantum contextuality \cite{Graph-Theoretic-Approach-to-Quantum-Correlations}. In this framework, for given projectors $(\Pi_i)_i$, the vertices of the corresponding exclusivity graph represent the projectors and two vertices are connected if the projectors are orthogonal, $\Pi_i\Pi_j=0$. Conversely, in a projective representation (PR) of a given graph one associates to each vertex a projector $\Pi_i$ such that  $\Pi_i\Pi_j=0$ holds for every edge $(i,j)$ of the graph. If the rank of all projectors in a PR is $r$, the representation is of rank $r$ and we denote by $d_\pi(G,r)$ the minimal dimension in which a rank-$r$ PR can be found for the graph $G$.

We now briefly review some important concepts. For a graph $G$ we write $V(G)$ for the set of vertices and $E(G)$ for the set of edges. A clique of a graph is any set of mutually connected vertices and an independent set is any set of vertices where no pair of vertices is connected. The cardinality of the maximal clique is the clique number $\omega(G)$ and clearly $d_\pi(G,r)\ge r\omega(G)$. The set of all independent sets is denoted by $\mathcal{I}(G)$ and the cardinality of the largest independent set is the independence number $\alpha(G)$. The chromatic number $\chi(G)$ is the minimal number of different colours that is needed to colour every vertex of a graph such that no vertices with the same colour are  connected. The fractional chromatic number $\chi_f(G)$ is a relaxation of the chromatic number. Here, there are in total $n$ colours and one assigns $m$ colours to every vertex. Then, the fractional chromatic number is the infimum over $\frac{n}{m}$, such that adjacent vertices do not share any of the assigned colours.

In a noncontextual hidden variable model, we assign outcomes $0$ or $1$ to each vertex of an exclusivity graph such that for adjacent vertices at most one outcome is $1$. The convex hull of all those possible  assignments is the set of noncontextual probability assignments. This coincides with the stable set \cite{Quantum_Contextuality, Graph-Theoretic-Approach-to-Quantum-Correlations}
\begin{align}
\label{eq_stable_set}
\text{STAB}(G) = \text{conv} \Set{ \boldsymbol v \in \set{ 0,1 }^{\abs{V(G)}} | v_i v_j = 0  \text{ for all }  (i,j) \in E(G) }.
\end{align}
The stable set gives rise to the weighted independence number\cite{grotschel1984polynomial}
\begin{equation}
\alpha(G,\boldsymbol{w})=\max_{\boldsymbol{x}} \set{ \boldsymbol{w} \cdot \boldsymbol{x} |\boldsymbol{x} \in \text{STAB}(G)} = \max_I \Set{ \sum_{i\in I}w_i | I\in \mathcal I(G) }
\end{equation}
for the vector of weights $\boldsymbol w\in \mathbb R^{\abs{V(G)}}$.

In quantum theory, outcomes of (sharp) measurements are described by projectors $\Pi_k$ and the probability for an event to occur is calculated as $p_k =\tr(\rho \Pi_k)$. The set of all possible quantum probability assignments is then given by the theta body \cite{State-ind-quantum-contextuality-with-projectors-of-nonunit-rank, Graph-Theoretic-Approach-to-Quantum-Correlations, GROTSCHEL_1986_Relaxation_of_vertex_packing}
\begin{align}
\label{eq_Theta_body}
    \text{TH}(G) = \set{  \left( \tr(\rho \Pi_i) \right)_i  | (\Pi_i)_i \text{ is a PR of $G$,}\; \tr(\rho) = 1,\; \rho \geq 0 }.
\end{align}
Quantum contextuality occurs now if one can find a probability assignment $P$ that can only be achieved by the quantum model, but not by any noncontextual hidden variable model, that is \cite{State-ind-quantum-contextuality-with-projectors-of-nonunit-rank}, $P \in \text{TH}(G) \setminus \text{STAB}(G)$.

\section{State-independent contextuality}
\label{sec_state-independent_contextuality}
In the previous section, the probability assignment $P$ depends on the choice of the quantum state and the PR of the exclusivity graph $G$. In contrast, for SIC one requires that one can find a PR $(\Pi_k)_k$ of  $G$ such that for no state $\rho$ the corresponding quantum probability assignment $P=(\tr(\rho\Pi_k))_k$ is in $\text{STAB}(G)$. Since the stable set is convex as well as the set of all quantum probability assignments with fixed projectors, one can then find weights $\boldsymbol w$ such that the inequality
\begin{align}
\label{eq_condition_for_SICx}
   \min_\rho\left( \sum_k w_k \tr(\rho \Pi_k)\right)\le \alpha(G,\boldsymbol{w}),
\end{align}
is violated \cite{State-ind-quantum-contextuality-with-projectors-of-nonunit-rank}. The SIC ratio $\eta$ for the projectors $(\Pi_k)_k$ is the maximal ratio of the left hand side and the right hand side of Eq.~\eqref{eq_condition_for_SICx}, where the ratio is maximised over all weights $\boldsymbol w$. Hence a PR of $G$ features SIC if and only if $\eta >1$. Further maximising $\eta$ over all rank-$r$ PRs of a graph $G$ yields the rank-$r$ SIC ratio $\eta(G,r)$ of $G$ and a graph has a rank-$r$ PR featuring SIC exactly when $\eta(G,r)> 1$.

The SIC ratio $\eta$ of given projectors $(\Pi_k)_k$ can be found by solving the semidefinite program
\begin{align}
\label{eq_Problem_to_find_optimal_weights}
\begin{split}
\text{max} \quad & \eta \\
\text{such that} \quad & \sum_k w_k \Pi_k \geq \eta \openone, \\
& \sum_{k\in I} w_k \leq 1 \quad \text{for all } I\in \mathcal I(G) \\
& w_k\ge 0 \quad\text{for all } k\in V(G).
\end{split}
\end{align}
We mention, that if we relax the program by taking the trace of both sides of the first condition, then the solution $\eta^*$ of the modified program satisfies $\eta^*d=r\chi_f(G)$, where $d=\tr(\openone)$ is the dimension of the Hilbert space and $r$ the rank of the projectors. Thus, we have
\begin{equation}\label{eq-sic-ratio-fcn}
r\chi_f(G)\ge \eta(G,r)d_\pi(G,r),
\end{equation}
tightening the previously established relations \cite{Necessary_sufficient_condition_for_SIC_meas_scenarios, Necessary-and-Sufficient-Condition-for-Q-State-Independent-Contextuality,
State-ind-quantum-contextuality-with-projectors-of-nonunit-rank} between $\chi_f$, SIC, and $d_\pi$.

In order to see that the solution $\eta$ and $\boldsymbol w$ of Eq.~\eqref{eq_Problem_to_find_optimal_weights} yields the SIC ratio, one only has to show that in Eq.~\eqref{eq_condition_for_SICx} negative weights cannot  increase the SIC ratio. First, the left hand side clearly does not become smaller. Second, the right hand side does not change. Indeed for any independent set $I$ and $\alpha_I= \sum_{k\in I} w_k$, we can remove all $k$ with $w_k<0$ from $I$, yielding the independent set $I'\subset I$. Then $\alpha_{I'}\ge \alpha_I$ and hence negative weights do not increase $\alpha(G,\boldsymbol w)=\max_I \alpha_I$.

In the following, we often start from a given graph and we aim to prove or disprove the existence of PR. This problem can be solved numerically using an alternating optimisation algorithm \cite{State-ind-quantum-contextuality-with-projectors-of-nonunit-rank} for finding a Gram matrix with rank $d$ and with all entries $0$ that correspond to adjacent vertices. If a Gram matrix can be found, then a rank-one PR in dimension $d$ can be constructed. We will refer to this method as the see-saw algorithm throughout. The approach can then be used to find rank-$r$ PRs by extending the initial graph $G$ to the graph $G^r$, for which every vertex is replaced by a clique of size $r$, see Ref.~\onlinecite{State-ind-quantum-contextuality-with-projectors-of-nonunit-rank} for details. This method is in particular useful when aiming for SIC, since the dimension of a PR featuring SIC is upper bounded by the fractional chromatic number via Eq.~\eqref{eq-sic-ratio-fcn} and $\eta(G,r)>1$.

\section{Graphs with rank advantage}
\label{sec_about_AR_graphs}
\begin{figure}
    \centering
    \includegraphics[width=0.7\linewidth]{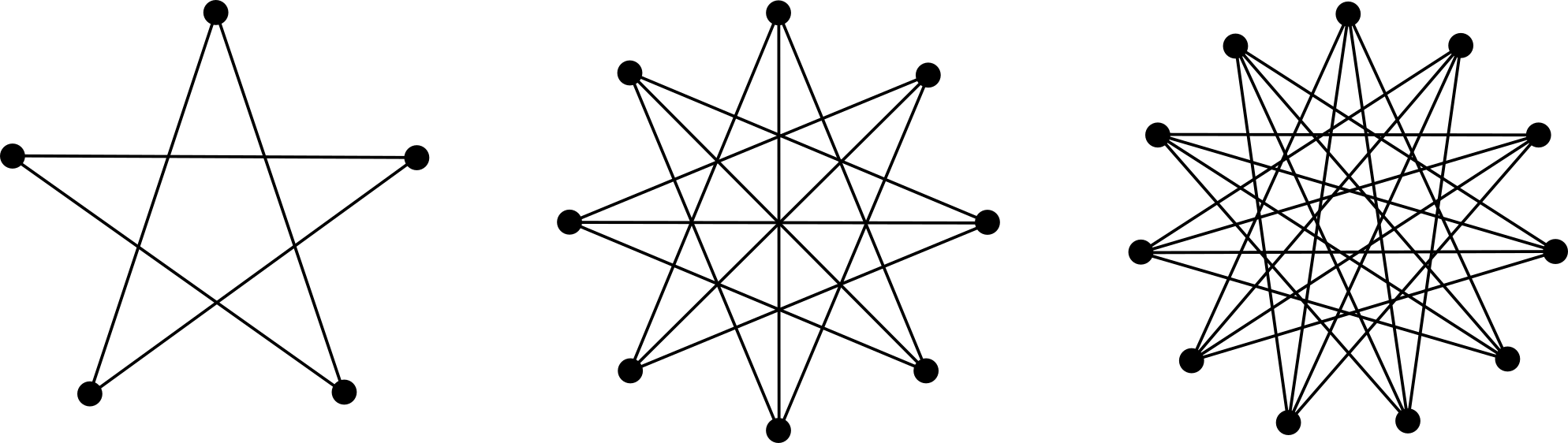}
    \caption{The first three examples of the $\AR(r)$ graphs, from left to right $\AR(2)$, $\AR(3)$ and $\AR(4)$.}
    \label{fig_AR_graphs}
\end{figure}

The fact that there exist SIC scenarios that require projectors of a higher rank \cite{State-ind-quantum-contextuality-with-projectors-of-nonunit-rank} motivates the examination of graphs that have a rank advantage, independent of whether those graphs admit a PR featuring SIC. A graph has a rank advantage if a higher rank PR is more ``efficient'' with respect to the Hilbert space dimension than any rank-one PR, that is, $d_\pi(G,r)<rd_\pi(G,1)$.

The first class of examples with a rank advantage are the odd cycle graphs $C_{2r+1}$ with $r\ge 2$. These graphs have vertices $0,1,\dotsc,2r$ and edges $(j, j\oplus 1)$, where ``$\oplus$'' denotes addition modulo $2r+1$. Furthermore they have the following properties.
\begin{enumerate}[(i)]\itemsep0pt
\item $\alpha(C_{2r+1})=r$ and $\chi_f(C_{2r+1})=2+\frac1r$, see Proposition~3.1.2 of Ref.~\onlinecite{Fractional_Graph_Theory}.
\item $d_\pi(C_{2r+1},1)= 3$, see Ref.~\onlinecite{Araujo_2013}.
\item The projectors
\begin{align}
\label{eq_cycle_graph_projectors_rank_k}
  \Pi_j =  \sum_{t=0}^{r-1} \ket{jr \oplus t}\!\bra{jr \oplus t},
\end{align}
form a rank-$r$ PR of $C_{2r+1}$ in dimension $2r+1$ with $\sum_j \Pi_j = r\openone$.
\end{enumerate}
The last statement can be verified by direct calculation.
From properties (ii) and (iii) it follows that $C_{2r+1}$ has a rank advantage for rank $r$.

The second class of examples uses circulant graphs, which are a generalisation of the cycle graphs. The circulant graph $C(n,(x_k)_k)$ has $n$ vertices $0,1,\dotsc,n-1$, such that each vertex $j$ is connected to the vertices $j\oplus x_k$ and $j\oplus (-x_k)$ for all $x_k$. Again, ``$\oplus$'' denotes addition modulo $n$. For $r\ge 2$ we define now the family
\begin{align}
\label{eq_AR_as_circulant_graphs}
\AR(r)=C(3r-1,(r,r+1,\dotsc,2r-1)),
\end{align}
see Figure~\ref{fig_AR_graphs}. The $\AR$ graphs have the following properties.
\begin{enumerate}[(i)]\itemsep0pt
\item $\alpha(\AR(r))=r$ and $\chi_f(\AR(r))=3-\frac1r$.
\item $d_\pi(\AR(r),1)= 3$.
\item The projectors
\begin{align}
\label{eq_projectors_of_AR_graphs}
    \Pi_j =  \sum_{t=0}^{r-1} \ket{j\oplus t}\!\bra{j\oplus t}.
\end{align}
form a rank-$r$ PR of $\AR(r)$ in dimension $3r-1$ such that $\sum_j \Pi_j = r\openone$.
\end{enumerate}
The proofs of properties (i) and (ii) are deferred to Appendix~\ref{app-AR-properties} and (iii) can be again verified by direct calculation. Combining (ii) and (iii) we see that $\AR(r)$ has a rank advantage for rank $r$. Using the see-saw algorithm for $r=2,\dotsc,10$, we observe that $\AR(r)$ does not have a rank advantage for any rank lower than $r$.

\section{The disjunctive graph product}
\label{sec_disjunctive_graph_product}
Our main tool to construct high-rank SIC scenarios is the disjunctive product $G\lor F$. It is defined via the Cartesian product of the vertex sets of two graphs $G$ and $F$,
\begin{align}
    V(G \lor F) = \set{ (u,v) \mid u \in V(G), v \in V(F) }
\end{align}
and the edge set given by
\begin{align}
    E(G \lor F) = \set{ ((u,v),({\tilde{u}},{\tilde{v}})) \mid (u,{\tilde{u}} ) \in E(G) \text{ or } (v,{\tilde{v}}) \in E(F)}.
\end{align}
The disjunctive product occurs naturally when considering the tensor product of PRs of two graphs. More specifically, for $(\Gamma_u)_u$ a PR of the graph $G$ and  $(\Gamma_v)_v$ for a PR of the graph $F$, the projectors $(\Gamma_u \otimes  \Omega_v)_{(u,v)}$ form a PR of $G\vee F$. This follows at once because $\Omega_u \otimes  \Omega_v$ is orthogonal to $\Gamma_{\tilde u} \otimes  \Omega_{\tilde v}$ exactly when $\Gamma_u$ and $\Gamma_{\tilde u}$ are orthogonal or $\Omega_{v}$ and $\Omega_{\tilde v}$ are orthogonal (or both).

As we aim to combine features of graphs, we consider now the weighted independence number $\alpha(G\lor F,\boldsymbol{w} )$ as well as the weighted fractional chromatic number \cite{Schrivjer_Combinatorial_Optimization_p1096/97, State-ind-quantum-contextuality-with-projectors-of-nonunit-rank} $\chi_f (G\lor F,\boldsymbol{w} )$. Those numbers are important for our purposes since $\alpha(G\lor F,\boldsymbol{w} )$ yields the classical bound for a contextuality scenario and $\chi_f (G\lor F)$ provides a bound on the maximal dimension in which a PR featuring SIC can exist.

\begin{theorem}
\label{thm_alpha_and_chi_disj_prod}
For two graphs $G$ and $F$, weights $\boldsymbol{w}^G \ge 0$ for $G$, $\boldsymbol{w}^F \ge 0$ for $F$, and $\boldsymbol w^{G\vee F}=(w_i^G w_j^F)_{ij}$ for $G\vee F$, it holds that
\begin{align}
    \alpha\left(G\lor F, \boldsymbol{w}^{G\vee F}\right) &= \alpha\left(G,\boldsymbol{w}^G\right)  \alpha\left(F,\boldsymbol{w}^F\right), \\
    \chi_f(G \lor F, \boldsymbol{w}^{G\vee F}) &= \chi_f(G, \boldsymbol{w}^G)  \chi_f(F, \boldsymbol{w}^F).
\end{align}
\end{theorem}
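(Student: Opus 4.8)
The plan is to reduce both identities to a single structural fact about independent sets of the disjunctive product, and then to invoke linear-programming duality for the fractional chromatic number. First I would prove the key lemma: a set $S\subseteq V(G\lor F)$ is independent if and only if $S\subseteq A\times B$ for some $A\in\mathcal I(G)$ and $B\in\mathcal I(F)$; equivalently, the maximal independent sets of $G\lor F$ are precisely the products $A\times B$ of maximal independent sets. The ``if'' direction follows at once from the edge rule of $G\lor F$: two distinct vertices of $A\times B$ differ in at least one coordinate, and there their projections are non-adjacent since $A$, respectively $B$, is independent, while in the other coordinate they either coincide or are again non-adjacent, so the pair is never an edge. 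For the ``only if'' direction I would pass to the coordinate projections $A=\pi_G(S)$ and $B=\pi_F(S)$: if $u\neq u'$ both lay in $A$ and were adjacent in $G$, then any two vertices of $S$ lying above $u$ and $u'$ would be adjacent in $G\lor F$, contradicting independence of $S$; hence $A\in\mathcal I(G)$, likewise $B\in\mathcal I(F)$, and $S\subseteq A\times B$ is trivial.

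For the weighted independence number I would combine the lemma with the nonnegativity of $\boldsymbol{w}^{G\vee F}$. Every independent $S$ is contained in $\pi_G(S)\times\pi_F(S)$, which is again independent, so because all weights are nonnegative the maximum of $\sum_{(u,v)\in S}w^G_u w^F_v$ is attained on some product $A\times B$ with $A\in\mathcal I(G)$ and $B\in\mathcal I(F)$. There it equals $(\sum_{u\in A}w^G_u)(\sum_{v\in B}w^F_v)$, and maximising the two factors independently yields $\alpha(G,\boldsymbol{w}^G)\,\alpha(F,\boldsymbol{w}^F)$.

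For the fractional chromatic number I would use the covering-LP formulation $\chi_f(G,\boldsymbol{w})=\min\{\sum_{I\in\mathcal I(G)}y_I:\sum_{I\ni v}y_I\ge w_v,\ \boldsymbol{y}\ge0\}$ together with its LP dual $\chi_f(G,\boldsymbol{w})=\max\{\boldsymbol{w}\cdot\boldsymbol{z}:\sum_{v\in I}z_v\le1\ \text{for all }I\in\mathcal I(G),\ \boldsymbol{z}\ge0\}$. The bound ``$\le$'' is a construction: if $\boldsymbol{y}^G$ and $\boldsymbol{y}^F$ are optimal fractional colourings, then setting $y_{I\times J}:=y^G_I y^F_J$ for $I\in\mathcal I(G)$, $J\in\mathcal I(F)$ gives, by the lemma, an assignment supported on independent sets of $G\lor F$ that covers each $(u,v)$ with weight $(\sum_{I\ni u}y^G_I)(\sum_{J\ni v}y^F_J)\ge w^G_u w^F_v$ and has total weight $\chi_f(G,\boldsymbol{w}^G)\chi_f(F,\boldsymbol{w}^F)$. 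The bound ``$\ge$'' uses duality: if $\boldsymbol{z}^G$ and $\boldsymbol{z}^F$ are dual-optimal, then for every independent $S\subseteq A\times B$ of $G\lor F$ one has $\sum_{(u,v)\in S}z^G_u z^F_v\le(\sum_{u\in A}z^G_u)(\sum_{v\in B}z^F_v)\le1$, so $z_{(u,v)}:=z^G_u z^F_v$ is dual-feasible for $G\lor F$ with objective value $(\boldsymbol{w}^G\cdot\boldsymbol{z}^G)(\boldsymbol{w}^F\cdot\boldsymbol{z}^F)=\chi_f(G,\boldsymbol{w}^G)\chi_f(F,\boldsymbol{w}^F)$. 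Combining the two bounds proves the equality.

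I expect the only real obstacle to be the bound ``$\ge$'' for $\chi_f$: unlike the other three inequalities it cannot be certified by exhibiting an explicit object inside the product graph, and it genuinely needs LP duality together with the ``only if'' half of the structural lemma, namely that independent sets of $G\lor F$ are confined to products of independent sets. Everything else amounts to routine bookkeeping of sums and products and uses only that all the weights involved are nonnegative.
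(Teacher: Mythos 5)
Your proposal is correct and follows essentially the same route as the paper: a structural lemma confining independent sets of $G\lor F$ to products of independent sets, a direct product construction for the lower bound on $\alpha$ and the upper bound on $\chi_f$, and LP strong duality with a product dual-feasible point for the reverse bound on $\chi_f$. The only difference is cosmetic — you prove the structural lemma by showing the coordinate projections $\pi_G(S)$, $\pi_F(S)$ are themselves independent, whereas the paper argues by filling in missing entries of a maximal independent set until it becomes a full product; both yield the same conclusion.
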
 \noindent
This theorem generalises Proposition~2.2 and Lemma~2.8 in Ref.~\onlinecite{Feige_randomized_graph_products}. The proof of Theorem~\ref{thm_alpha_and_chi_disj_prod} is given in Appendix~\ref{proof_independence_number_of_disj_prod}.

\begin{corollary}
\label{cor_higher_rank_SIC_rep_exists}
Let $( \Gamma_i )_i$ be projectors of rank $r_1$ with SIC ratio $\eta_1$, and $( \Omega_j )_j$ projectors of rank $r_2$ PR with SIC ratio $\eta_2$. Then $(\Gamma_i \otimes \Omega_j)_{i,j}$ are rank-$r_1 r_2$ projectors with SIC ratio $\eta_{12}\ge\eta_1\eta_2$.
\end{corollary}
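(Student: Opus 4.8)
The plan is to exhibit a single explicit choice of weights for the product projectors $(\Gamma_i\otimes\Omega_j)_{i,j}$ that already realises the ratio $\eta_1\eta_2$; since the SIC ratio $\eta_{12}$ is by definition the maximum over all admissible weights, this gives the claimed lower bound. Write $G$ and $F$ for graphs of which $(\Gamma_i)_i$ and $(\Omega_j)_j$ are projective representations with SIC ratios $\eta_1$ and $\eta_2$. As observed in Section~\ref{sec_disjunctive_graph_product}, $(\Gamma_i\otimes\Omega_j)_{i,j}$ is a PR of $G\lor F$, and since $\operatorname{rank}(\Gamma_i\otimes\Omega_j)=\operatorname{rank}(\Gamma_i)\operatorname{rank}(\Omega_j)=r_1r_2$ it is a rank-$r_1r_2$ PR; so only the bound on $\eta_{12}$ remains to be shown.

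First I would rewrite the SIC ratio in spectral form. Since $\min_\rho\tr(\rho M)=\lambda_{\min}(M)$ for Hermitian $M$ and, by the discussion following Eq.~\eqref{eq_Problem_to_find_optimal_weights}, one may restrict to nonnegative weights, the SIC ratio of a PR $(\Pi_k)_k$ of a graph $H$ equals $\max_{\boldsymbol w\ge 0}\lambda_{\min}\!\bigl(\sum_k w_k\Pi_k\bigr)/\alpha(H,\boldsymbol w)$. Let $\boldsymbol w^G\ge 0$ attain this maximum for $(\Gamma_i)_i$ and $\boldsymbol w^F\ge 0$ for $(\Omega_j)_j$, so that $\eta_1=\lambda_{\min}(\sum_i w_i^G\Gamma_i)/\alpha(G,\boldsymbol w^G)$ and $\eta_2=\lambda_{\min}(\sum_j w_j^F\Omega_j)/\alpha(F,\boldsymbol w^F)$. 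I would then evaluate the ratio for $(\Gamma_i\otimes\Omega_j)_{i,j}$ at the product weights $\boldsymbol w^{G\vee F}=(w_i^Gw_j^F)_{i,j}\ge 0$. The weighted operator sum factorises,
\[
\sum_{i,j} w_i^G w_j^F\,\Gamma_i\otimes\Omega_j=\Bigl(\sum_i w_i^G\Gamma_i\Bigr)\otimes\Bigl(\sum_j w_j^F\Omega_j\Bigr),
\]
both factors being positive semidefinite as nonnegative combinations of projectors. Hence its smallest eigenvalue equals $\lambda_{\min}(\sum_i w_i^G\Gamma_i)\,\lambda_{\min}(\sum_j w_j^F\Omega_j)$, since the spectrum of a tensor product of positive semidefinite operators consists of the (nonnegative) products of the eigenvalues. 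By Theorem~\ref{thm_alpha_and_chi_disj_prod} the denominator is $\alpha(G,\boldsymbol w^G)\,\alpha(F,\boldsymbol w^F)$. Dividing, the ratio at $\boldsymbol w^{G\vee F}$ is exactly $\eta_1\eta_2$, and since $\eta_{12}$ dominates the ratio at this particular weight we conclude $\eta_{12}\ge\eta_1\eta_2$.

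The argument is mostly bookkeeping with the definitions together with the multiplicativity of the weighted independence number from Theorem~\ref{thm_alpha_and_chi_disj_prod}; the one point deserving care is the behaviour of the smallest eigenvalue under tensor products. The identity $\lambda_{\min}(A\otimes B)=\lambda_{\min}(A)\lambda_{\min}(B)$ requires $A$ and $B$ positive semidefinite, so that the minimal product of eigenvalues is the product of the minimal eigenvalues; this is precisely where $\boldsymbol w^G,\boldsymbol w^F\ge 0$ is used, and for indefinite operators the identity would fail. I would also note in passing that the maxima defining $\eta_1$ and $\eta_2$ are attained because the admissible weights can be taken from a compact set, so speaking of optimal weights is legitimate.
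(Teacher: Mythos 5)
Your proof is correct and follows essentially the same route as the paper's: evaluate the ratio at the product of the optimal weights, factorise the weighted operator sum as a tensor product, use positivity to multiply the lower spectral bounds, and invoke Theorem~\ref{thm_alpha_and_chi_disj_prod} for the multiplicativity of the weighted independence number. The only cosmetic difference is that you phrase the key step via $\lambda_{\min}(A\otimes B)=\lambda_{\min}(A)\lambda_{\min}(B)$ for positive semidefinite factors, whereas the paper writes the equivalent operator inequality $\sum_i w_i^G\Gamma_i\otimes\sum_j w_j^F\Omega_j\ge \eta_1\eta_2\,\alpha(G,\boldsymbol w^G)\alpha(F,\boldsymbol w^F)\,\openone$.
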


\begin{proof}
We write $G$ for the exclusivity graph of $(\Gamma_i)_i$ and $\boldsymbol w^G$ for the optimal weights in Eq.~\eqref{eq_Problem_to_find_optimal_weights} achieving the SIC ratio $\eta_1$. Similarly, we define $F$, $\boldsymbol w^F$, and $\eta_2$ for $(\Omega_j)_j$. Then
\begin{equation}
\sum_{i} w_i^G\Gamma_i \otimes \sum_j w_j^F \Omega_j \ge (\eta_1\alpha(G,\boldsymbol w^G)\openone)\otimes (\eta_2\alpha(F,\boldsymbol w^F) \openone).
\end{equation}
Due to Theorem~\ref{thm_alpha_and_chi_disj_prod}, the right-hand side becomes $\eta_1\eta_2 \alpha(G\lor F,\boldsymbol w^{G\lor F})\openone$. The assertion follows, since $(\Gamma_i\otimes \Omega_j)_{i,j}$ forms a rank-$r_1r_2$ PR of $G\lor F$.
\end{proof}
We mention that  $\eta(G\lor F,r_1r_2)\ge \eta(G,r_1)\eta(F,r_2)$ is a direct consequence of Corollary~\ref{cor_higher_rank_SIC_rep_exists}.

\section{Systematic construction for SIC requiring higher rank}
\label{sec_construction_using_cycle_graphs}
We can make use of Corollary~\ref{cor_higher_rank_SIC_rep_exists} by combining a graph $G$ with SIC for rank one, $\eta(G,1)>1$, and a graph $F$ with rank advantage and $\eta(F,r)\ge 1$. The resulting graph $G\lor F$ will then have SIC in rank $r$, $\eta(G\lor F,r)>1$. However, in general, the combined graph $G\lor F$ may still allow SIC for a lower rank or even rank one, since the optimal PR and optimal weights are not necessarily of product form. We provide now a construction method for SIC graphs that provably need a set of projectors of a rank greater than a given rank $r$.

\begin{theorem}
\label{thm_cycle_combination_for_higher_rank}
For any graph $G$ with $0<\kappa<1$, where $\kappa=2(\chi_f(G)-\omega(G))$, and any rank $r<\frac 1\kappa$, choose a rank $k$ such that
\begin{equation}\label{eq-mthm-cond}
  k\ge \frac{r\chi_f(G)}{1-r\kappa}.
\end{equation}
Then $\eta(G\lor C_{2k+1},r)\le 1$, while $\eta(G\lor C_{2k+1},k)\ge \eta(G,1)$.
\end{theorem}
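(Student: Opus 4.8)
I would prove the two halves of the statement separately. For $\eta(G\lor C_{2k+1},k)\ge\eta(G,1)$ it suffices to exhibit a suitable PR: reading $k$ for the $r$ in Eq.~\eqref{eq_cycle_graph_projectors_rank_k} gives a rank-$k$ PR of $C_{2k+1}$ with $\sum_j\Pi_j=k\openone$ in dimension $2k+1$, so rescaling the uniform weights by $1/\alpha(C_{2k+1})=1/k$ shows $\eta(C_{2k+1},k)\ge1$. Tensoring an almost-optimal rank-one PR of $G$ with this PR and invoking Corollary~\ref{cor_higher_rank_SIC_rep_exists} — whose argument uses Theorem~\ref{thm_alpha_and_chi_disj_prod} to evaluate $\alpha(G\lor C_{2k+1},\boldsymbol w^G\otimes\boldsymbol w^F)=\alpha(G,\boldsymbol w^G)\,\alpha(C_{2k+1},\boldsymbol w^F)$ — then yields a rank-$k$ PR of $G\lor C_{2k+1}$ with SIC ratio at least $\eta(G,1)$.

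For $\eta(G\lor C_{2k+1},r)\le1$ I would go through the dimension inequality Eq.~\eqref{eq-sic-ratio-fcn}, which reduces the claim to
\[
d_\pi(G\lor C_{2k+1},r)\;\ge\;r\,\chi_f(G\lor C_{2k+1}).
\]
The right-hand side is controlled by Theorem~\ref{thm_alpha_and_chi_disj_prod} with unit weights, $\chi_f(G\lor C_{2k+1})=\chi_f(G)\,\chi_f(C_{2k+1})=\chi_f(G)\,\tfrac{2k+1}{k}$. For the left-hand side I would first establish the general product bound $d_\pi(G\lor F,r)\ge\omega(G)\,d_\pi(F,r)$. Take a rank-$r$ PR $(\Pi_{(u,v)})$ of $G\lor F$ on a Hilbert space $\mathcal H$ and a maximum clique $K$ of $G$, and let $T_u$ be the span of the ranges of the $\Pi_{(u,v)}$, $v\in V(F)$. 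Any edge of $G$ inside $K$ makes $(u,v)$ adjacent to $(u',v')$ for \emph{all} $v,v'$, so the subspaces $(T_u)_{u\in K}$ are pairwise orthogonal; and for fixed $u$ the projectors $(\Pi_{(u,v)})_v$ restrict to $T_u$, where they form a spanning rank-$r$ PR of $F$, so $\dim T_u\ge d_\pi(F,r)$ (a minimal PR is automatically spanning). Summing over $u\in K$ gives $\dim\mathcal H\ge\omega(G)\,d_\pi(F,r)$. I would then complement this with the parity bound $d_\pi(C_{2k+1},r)\ge 2r+1$: in dimension $2r$, orthogonality forces $S_{j+1}=S_j^\perp$, so the rank-$r$ subspaces would alternate $S_0,S_0^\perp,S_0,\dots$ along the cycle, which is inconsistent for an odd cycle. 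Hence $d_\pi(G\lor C_{2k+1},r)\ge\omega(G)(2r+1)$. Finally the hypothesis Eq.~\eqref{eq-mthm-cond} rearranges to $k\,(1+2r\omega(G))\ge r\,\chi_f(G)\,(2k+1)$, and since $\omega(G)\ge1$ this implies $k\,\omega(G)(2r+1)\ge r\,\chi_f(G)\,(2k+1)$, that is, $\omega(G)(2r+1)\ge r\,\chi_f(G)\,\tfrac{2k+1}{k}=r\,\chi_f(G\lor C_{2k+1})$, which is exactly the displayed inequality, and $\eta(G\lor C_{2k+1},r)\le 1$ follows.

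The step I expect to be the real obstacle is the product dimension bound $d_\pi(G\lor F,r)\ge\omega(G)\,d_\pi(F,r)$ — concretely, the fact that a maximum clique of $G$ forces $\mathcal H$ to split into $\omega(G)$ mutually orthogonal blocks, each of which must independently (hence spanningly) carry a rank-$r$ projective representation of $F$ — together with establishing that $d_\pi(C_{2k+1},r)$ strictly exceeds the clique bound $2r$. Everything else is arithmetic, plus the trivial remark that Eq.~\eqref{eq-mthm-cond} forces $k\ge r\chi_f(G)>r$.
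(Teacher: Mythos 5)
Your proof is correct, and its overall skeleton coincides with the paper's: the second half is obtained exactly as in the paper (the uniform-weight rank-$k$ PR of $C_{2k+1}$ gives $\eta(C_{2k+1},k)\ge 1$, then Corollary~\ref{cor_higher_rank_SIC_rep_exists}), and the first half runs through Eq.~\eqref{eq-sic-ratio-fcn}, Theorem~\ref{thm_alpha_and_chi_disj_prod}, and a lower bound on $d_\pi(G\lor C_{2k+1},r)$. Where you genuinely diverge is in that dimension bound. The paper's Lemma~\ref{lem_disj.prod._of_SIC_and_odd_cyc_has_no_rep_in_dim_2_times_d} embeds the graph power $(C_\ell)^{\omega(G)}$ into $G\lor C_\ell$ and invokes the external identity $d_\pi(H^d,r)=d_\pi(H,dr)$ to reduce to a single parity argument for a rank-$r\omega(G)$ PR of $C_\ell$ in dimension $2r\omega(G)$, yielding $d_\pi(G\lor C_\ell,r)\ge 2r\omega(G)+1$. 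You instead prove the self-contained product bound $d_\pi(G\lor F,r)\ge \omega(G)\,d_\pi(F,r)$ by splitting the Hilbert space into the mutually orthogonal subspaces $T_u$ attached to a maximum clique (this decomposition argument is sound: each $T_u$ is invariant under the $\Pi_{(u,v)}$ and carries a rank-$r$ PR of $F$, whence $\dim T_u\ge d_\pi(F,r)$; the parenthetical about minimal PRs being spanning is not actually needed), and combine it with the rank-$r$ parity bound $d_\pi(C_{2k+1},r)\ge 2r+1$. This gives the slightly stronger bound $\omega(G)(2r+1)=2r\omega(G)+\omega(G)$, which you then deliberately weaken to $2r\omega(G)+1$ in the final arithmetic; your rearrangement of Eq.~\eqref{eq-mthm-cond} checks out and matches the paper's. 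The trade-off: the paper's route is shorter but leans on a nontrivial imported theorem about graph powers, while yours is elementary, avoids that import, and proves a reusable general inequality for disjunctive products that would even permit a marginally weaker condition on $k$ when $\omega(G)>1$.
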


Note that $k>r$ due to $\kappa >0$ and $\omega(G)\ge 1$. If $\eta(G,1)>1$, then $G\lor C_{2k+1}$ has a rank-$k$ PR featuring SIC in dimension $d=(2k+1)\omega(G)$, simply by taking the tensor product of the rank-one PR of $G$ and the rank-$k$ PR of $C_{2k+1}$ from Eq.~\eqref{eq_cycle_graph_projectors_rank_k}. The examples constructed from Theorem~\ref{thm_cycle_combination_for_higher_rank} necessarily have a SIC ratio close to unity due to $\eta(G,1)\le \frac{\chi_f(G)}{\omega(G)}= 1+\frac{\kappa}{2\omega(G)}<1+\frac1{2r\omega(G)}$. For the proof of Theorem~\ref{thm_cycle_combination_for_higher_rank} we utilise the following lower bound on the dimensions of a PR of $G\lor C_{2k+1}$.

\begin{lemma}
\label{lem_disj.prod._of_SIC_and_odd_cyc_has_no_rep_in_dim_2_times_d}
For any graph $G$ and odd $\ell>1$ we have $d_\pi(G\lor C_\ell,r)\ge 2r\omega(G)+1$.
\end{lemma}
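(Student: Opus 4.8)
The plan is to reduce the bound to the single‑cycle case, namely to the sub‑claim that an odd cycle $C_\ell$ with $\ell\ge 3$ admits no rank-$r$ PR in dimension $2r$; equivalently $d_\pi(C_\ell,r)\ge 2r+1$.

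First I would fix a maximum clique $K=\{g_1,\dots,g_m\}$ of $G$, so $m=\omega(G)\ge 1$, and take an arbitrary rank-$r$ PR $(\Pi_{(u,v)})$ of $G\lor C_\ell$ acting on $\mathbb{C}^{d}$. I would then look only at the vertices in $K\times V(C_\ell)$. For $i\ne j$ the vertices $g_i$ and $g_j$ are adjacent in $G$, so by the definition of the disjunctive product $(g_i,v)$ and $(g_j,v')$ are adjacent in $G\lor C_\ell$ for \emph{all} $v,v'$; hence every projector in the ``$i$‑th block'' $\{\Pi_{(g_i,v)}\}_{v}$ is orthogonal to every projector in the ``$j$‑th block''. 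Consequently the subspaces $V_i:=\sum_{v}\operatorname{supp}\Pi_{(g_i,v)}$ are pairwise orthogonal, so $d\ge\sum_{i=1}^{m}\dim V_i$. On the other hand, inside a single block the adjacency relations among the $(g_i,v)$ are exactly those of $C_\ell$ (there is no self‑loop at $g_i$), so $(\Pi_{(g_i,v)})_{v}$ is a rank-$r$ PR of $C_\ell$ supported on $V_i$, giving $\dim V_i\ge d_\pi(C_\ell,r)$. Combining the two observations, $d\ge m\,d_\pi(C_\ell,r)$, so once the sub‑claim is proved we obtain $d\ge m(2r+1)=2r\omega(G)+m\ge 2r\omega(G)+1$ for every rank-$r$ PR, which is the assertion.

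It then remains to prove the sub‑claim. Suppose $C_\ell$ with $\ell$ odd, $\ell\ge 3$, had a rank-$r$ PR $\Pi_0,\dots,\Pi_{\ell-1}$ on $\mathbb{C}^{2r}$ with $\Pi_j\Pi_{j\oplus 1}=0$. Then $\operatorname{supp}\Pi_{j\oplus1}\subseteq\ker\Pi_j$, and since both subspaces have dimension $2r-r=r$ they coincide, forcing $\Pi_{j\oplus1}=\openone-\Pi_j$ for every $j$. Iterating gives $\Pi_{j\oplus2}=\Pi_j$, i.e.\ the projectors are $2$‑periodic around the cycle; but $\ell$ is odd, so running once around the cycle yields $\Pi_0=\openone-\Pi_0$ --- equivalently $\Pi_{\ell-1}=\Pi_0$ while the edge $(\ell-1,0)$ demands $\Pi_{\ell-1}\Pi_0=0$ --- which is impossible for a nonzero projector. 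Hence $d_\pi(C_\ell,r)\ge 2r+1$.

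The two ingredients --- the orthogonal‑block decomposition and the rigidity of odd cycles in their minimal possible dimension --- are both short, so I do not anticipate a genuine obstacle; the one point that must be handled carefully is that the blocks are \emph{fully} orthogonal (every projector of one block against every projector of another), which is precisely where the clique hypothesis on $K$ is used, together with the observation that the rank rigidity $\Pi_{j\oplus1}=\openone-\Pi_j$ is available only because the ambient dimension is exactly $2r$.
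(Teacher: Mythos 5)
Your proof is correct, but it reduces the problem differently from the paper. The paper restricts to a maximum clique $S$ of $G$, observes that $S\lor C_\ell$ contains the graph power $(C_\ell)^{\omega(G)}$ as a subgraph, and then invokes the identity $d_\pi((C_\ell)^d,r)=d_\pi(C_\ell,dr)$ from an external reference to reduce everything to the non-existence of a rank-$rd$ PR of $C_\ell$ in dimension $2rd$. You instead keep the clique structure and argue directly: the $\omega(G)$ blocks $\{\Pi_{(g_i,v)}\}_v$ have pairwise orthogonal supports (this is exactly where the full cross-orthogonality coming from the clique edges enters), each block is a rank-$r$ PR of $C_\ell$ on its own subspace $V_i$, and the odd-cycle parity rigidity gives $\dim V_i\ge 2r+1$. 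Both routes end with the same rigidity argument ($\Pi_{j\oplus 1}=\openone-\Pi_j$ forced by dimension counting, contradicting odd $\ell$), but your decomposition is self-contained, avoids the cited graph-power theorem, and in fact delivers the stronger conclusion $d_\pi(G\lor C_\ell,r)\ge(2r+1)\,\omega(G)$, which exceeds the stated bound $2r\omega(G)+1$ whenever $\omega(G)>1$; the paper's detour through $(C_\ell)^d$ discards the edges of $S\lor C_\ell$ between distinct clique copies at distinct cycle vertices and therefore only recovers $2r\omega(G)+1$. The stated lemma is of course implied by your stronger bound, so nothing is lost.
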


The proof of Lemma~\ref{lem_disj.prod._of_SIC_and_odd_cyc_has_no_rep_in_dim_2_times_d} can be found in Appendix~\ref{app_odd_cycle_properties}.

\begin{proof}[Proof of Theorem~\ref{thm_cycle_combination_for_higher_rank}]
According to Theorem~\ref{thm_alpha_and_chi_disj_prod} and using $\chi_f( C_{2k+1}) = 2 +\frac{1}{k}$ we have
\begin{align}
    \chi_f(G \lor C_{2k+1}) = \chi_f(G) \chi_f( C_{2k+1}) = \left(2 +\frac{1}{k}\right) \chi_f(G).
\end{align}
Now, for any $r<\frac1\kappa$ and $k$ obeying Eq.~\eqref{eq-mthm-cond}, we get
\begin{equation}\begin{split}
    d_\pi(G\lor C_{2k+1},r)\eta(G\lor C_{2k+1},r)&\le r \chi_f(G \lor C_{2k+1})\\& \le 2r\omega(G)+1 \le d_{\pi}(G \lor C_{2k+1},r),
\end{split}\end{equation}
where the first inequality corresponds to Eq.~\eqref{eq-sic-ratio-fcn} and the last inequality is due to Lemma~\ref{lem_disj.prod._of_SIC_and_odd_cyc_has_no_rep_in_dim_2_times_d}. Hence $\eta(G\lor C_{2k+1},r)\le1$.

For the second statement we use the existence of a rank-$k$ PR of $C_{2k+1}$ with $\sum \Pi_j=k\openone$ and $\alpha(C_{2k+1})=k$. Hence $\eta(C_{2k+1},k)\ge 1$. According to Corollary~\ref{cor_higher_rank_SIC_rep_exists} this implies $\eta(G\lor C_{2k+1},k)\ge \eta(G,1)$.
\end{proof}

Due to Theorem~\ref{thm_cycle_combination_for_higher_rank}, we can construct graphs which require a higher rank than a given $r$ for SIC. For this we need to find a graph with SIC and with $\kappa<\frac1r$. In Table~\ref{tab_thm_examples} we list examples of such graphs for $r=1,\dotsc,5$. Interestingly, most of the well-known SIC scenarios are not suitable for application of Theorem~\ref{thm_cycle_combination_for_higher_rank} or are only suitable for small $r$. Therefore we construct three sets of rank-one SIC sets that are particularly well-suitable for Theorem~\ref{thm_cycle_combination_for_higher_rank}. The first two sets, see Table~\ref{tab_Pascal_18_rays}, have 18 rays in dimension 4 and thereby are similar to the 18 rays by Cabello \cite{cabello1996bell}. For the third set, we use 21 rays found by Bengtsson, Blanchfield, and Cabello \cite{BBC_21_complex_vectors} and remove rays from the set to find the smallest set with SIC ratio $\eta>1$. At most 4 rays can be removed, for example the rays generated by $(0,1,-q)$, $(-1,0,1)$, $(1,q,q)$, and $(q,1,q)$, where $q=\mathrm e^{2\pi i/3}$. The resulting graph $G_\mathrm{BBCr}$ performs best with respect to Theorem~\ref{thm_cycle_combination_for_higher_rank} by allowing up to $r=5$.

\begin{table}
    \centering
    \setlength{\tabcolsep}{8pt}
    \renewcommand{\arraystretch}{1.2}
    \begin{tabular}{lcccccrrrrr}
         \hline
        Graph $G$ & $\abs{V(G)}$ &  $\chi_f(G)$ & $\omega (G)$ & $\eta(G,1)$ & $1 / \kappa$ & $k_1$ & $k_2$ & $k_3$ & $k_4$ & $k_5$ \\
        \hline
        $G_{\mathrm{BBC}}$ & 21 & $3+\frac{1}{3}$  & 3 & $1+\frac{1}{9}$  &  $1+\frac{1}{2}$ & 10 & --  &  -- &  -- &  -- \\
        $G_{\mathrm{YO}}$ &  13 & $3+\frac{2}{11}$ & 3 & $1+\frac{2}{33}$ & $2+\frac{3}{4}$ &  5 & 24  & --  &  -- & -- \\
        $G_{\mathrm{H}}$  & 18 & $4+\frac{1}{8}$ & 4 & $1+\frac{1}{75}$ & 4 & 6  & 17  & 50  & --  & -- \\
        $G_{\mathrm{X}}$ & 18 & $4+\frac{1}{7}$ & 4 & $1+\frac{1}{42}$ & $3 + \frac{1}{2}$ &  6 & 20  & 87  & --  & --  \\
        $G_{\mathrm{BBCr}}$ & 17 & $3+\frac{1}{11}$ & 3 & $1+\frac{1}{78}$ & $5+\frac{1}{2}$ & 4  & 10  & 21  & 46  & 170  \\
         \hline
    \end{tabular}
    \caption{List of graphs $G$ and values for $k_r$, for which $G\lor C_{2k_r+1}$  does not have a rank-$r$ PR featuring SIC while it does so for rank $k_r$. The ranks $r$ and $k_r$ are chosen according to Theorem \ref{thm_cycle_combination_for_higher_rank}. Here, $G_\mathrm{BBC}$ denotes the graph formed by the 21 rays found by Bengtsson, Blanchfield, and Cabello in Ref.~\onlinecite{BBC_21_complex_vectors} and $G_\mathrm{YO}$ the graph by Yu and Oh formed by 13 rays, see Ref.~\onlinecite{yu2012state}. The graphs $G_{\mathrm{H}}$ and $G_{\mathrm{X}}$ correspond to the exclusivity graphs of the rays given in Table~\ref{tab_Pascal_18_rays} (a) and (b), respectively. Finally, $G_{\mathrm{BBCr}}$ is obtained by removing 4 vertices from $G_{\mathrm{BBC}}$. The graph6-codes for the graphs are provided in Appendix~\ref{appendix-graph6}. For the graphs $G_\mathrm{H}$, $G_\mathrm{X}$, and $G_\mathrm{BBCr}$, the value given for $\eta(G,1)$ is a lower bound.}
    \label{tab_thm_examples}
\end{table}

\begin{table}
    \centering (a)
    \setlength{\tabcolsep}{8pt}
    \begin{tabular}{|c|c|c|c|c|c|c|c|c|c|c|c|c|c|c|c|c|c|}
         \hline
         \rule{0pt}{7pt}1  & 0  &  0   &  1  &  1  &  1  &  1  &  0  &  0  & 0  &  0   &  1  &  1 &  1  &  0  &  0  &  0  &  0    \\
         0 & 0 & 0  & 1 &$\overline{1}$ & 0 & 0 & 1 & 1 & 0 & 0  & 1 &$\overline{1}$ &$\overline{1}$ & 1 & 1 & 1 & 1   \\
         0 & 1 & 0  & 0 & 0 & 1 &$\overline{1}$ & 0 & 0 & 1 & 1  &$\overline{1}$ & 1 &$\overline{1}$ & 1 & 1 &$\overline{1}$ &$\overline{1}$    \\
         0 & 0 & 1  & 0 & 0 & 0 & 0 & 1 &$\overline{1}$ & 1 &$\overline{1}$  & 0 & 0 & 0 & 1 &$\overline{1}$ & 1 &$\overline{1}$   \\
         \hline
    \end{tabular}\\ \medskip (b)
    \begin{tabular}{|c|c|c|c|c|c|c|c|c|c|c|c|c|c|c|c|c|c|}
         \hline
        \rule{0pt}{7pt}0 & 0 &  0 &  1  &  1  &  1  &  1  & 1  &  0  &  0  &  0  &  0  & 1 & 1 & 1 & 1 & 1 & 1   \\
         1 & 0 & 0  & 1 & 0 & 0 & 0 & 0 & 1 & 1 & 1 & 1  &$\overline{1}$ &$\overline{1}$ &$\overline{1}$ &$\overline{1}$ & 1 & 1    \\
         0 & 1 & 0  & 0 & 0 & 0 & 1 &$\overline{1}$ & 0 & 0 & 1 &$\overline{1}$   & 0 & 0 & 1 &$\overline{1}$ & 1 & 1    \\
         0 & 0 & 1  & 0 & 1 &$\overline{1}$ & 0 & 0 & 1 &$\overline{1}$ & 0 & 0   & 1 &$\overline{1}$ & 0 & 0 & 1 &$\overline{1}$    \\
         \hline
    \end{tabular}
    \caption{Two sets of 18 rays that form a SIC set in dimension 4. The exclusivity graph of (a) is $G_{\textrm H}$, while the exclusivity graph of (b) is $G_{\textrm{X}}$. We abbreviate $-1$ with  $\overline{1}$.}
    \label{tab_Pascal_18_rays}
\end{table}

\section{Combination of SIC graphs with $\AR(r)$ graphs}
\label{sec_comb_of_SIC_graph_with_AR}
In contrast to the general construction in Theorem~\ref{thm_cycle_combination_for_higher_rank} we aim now to find specific examples with special properties. In particular we aim for rank efficient graphs in the sense that rank $r$ is sufficient for SIC, while for $r-1$ no representation with SIC can be found. We achieve this by considering the disjunctive graph product with the advantage rank graphs $\AR(r)$. We have already seen in Section~\ref{sec_about_AR_graphs} that $d_{\pi}(\AR(r),1) = 3$ and that there exists a rank-$r$ representation in dimension $3r-1$. In contrast to the cycle graphs, an $\AR(r)$ graph has a rank advantage only for rank $r$ or larger, as we have numerically verified using the see-saw algorithm for $r\le 10$. Therefore, we combine those graphs with small SIC graphs and our numerical results suggest that the combined graphs have SIC for rank $r$ but they do not have SIC for lower rank.

\begin{table}
    \centering
    \setlength{\tabcolsep}{8pt}
    \renewcommand{\arraystretch}{1.2}
    \begin{tabular}{lcrccr}
        \hline
        Graph $\mathcal G$ & $r$ & $\abs{V(\mathcal G)}$ & $\chi_f(\mathcal G)$ & $\eta(\mathcal G,r)$ & $d$ \\
        \hline
        $G_\mathrm{YO} \lor \AR(2)$ & 2 &  65 & $7+\frac{21}{22}$ & $1+\frac{2}{33}$ & 15 \\
        $\mathcal R(G_\mathrm{YO} \lor AR(2))$ & 2 &  39 & $7 + \frac{71}{88}$ & $1+\frac{2}{65}$ & 15 \\
        $G_\mathrm{CEG}\lor \AR(2)$ & 2 &  90 & $11+\frac{1}{4}$ & $1+\frac{1}{8}$ & 20 \\
        $\mathcal R(G_\mathrm{CEG}\lor AR(2))$  & 2 &  54 & $10 + \frac{3}{4}$ & $1+\frac{1}{17}$ & 20 \\
        $G_\mathrm{YO} \lor \AR(3)$ & 3 & 104 & $8+\frac{16}{33}$ & $1+\frac{2}{33}$ & 24 \\
        \hline
    \end{tabular}
    \caption{Examples of graphs which have a rank-$r$ PR featuring SIC, while we find that this is not the case for any lower rank. The evidence for the non-existence of a PR is obtained through the see-saw algorithm. Here, $G_\mathrm{YO}$ is the graph by Yu and Oh \cite{yu2012state} and $G_\mathrm{CEG}$ denotes the graph due Cabello, Estebaranz, and Garc\`{i}a-Alcaine  \cite{cabello1996bell}. The reduced graphs, $\mathcal R(G)$, are obtained by systematically removing elements from the PR while maintaining a SIC ratio above unity. The value for the SIC ratio $\eta(G,r)$ is a lower bound, obtained for a PR in dimension $d$.}
    \label{tab_ar_examples}
\end{table}

Our results are summarised in Table~\ref{tab_ar_examples}. For every of the examples, we determine a lower bound on the SIC ratio by using the PR obtained from the Kronecker product of the rank-one PR of the graph and the PR of $\AR(r)$ in Eq.~\eqref{eq_projectors_of_AR_graphs}. We then use the see-saw algorithm to exclude a PR with rank $r-1$ in dimension $d'< (r-1)\chi_f(G)$. A failure of the algorithm with $40.000$ random initial values gives us a strong indication that no such PR exists. Additionally, we reduce the graphs $G\lor \AR(r)$ using the following heuristic method. For a given PR with SIC we check if any projector can be removed while maintaining a SIC ratio above unity, $\eta>1$. If this is the case, we remove the projector which still yields the highest SIC ratio. Iterating this, we obtain the reduced exclusivity graph $\mathcal R(G\lor \AR(r))$. We then use again the see-saw algorithm to exclude a PR of $\mathcal R(G\lor\AR(r))$ of lower rank.

\section{Conclusion}
\label{sec_conclusion}
Quantum measurements have a multitude of ways to be nonclassical with SIC being a prominent case. The key structure here are the exclusivity relations between the measurement outcomes, that is, which measurement projectors are orthogonal. We showed that the rank required for a SIC PR of given exclusivity relations can be arbitrarily large, as long as one can find a corresponding graph $G$ which satisfies the conditions of Theorem~\ref{thm_cycle_combination_for_higher_rank}. Based on this we used a graph which allows us to construct exclusivity relations that do not have a SIC PR if the rank is $r$ or lower, for $r=1,2,3,4,5$. We conjecture that one can find similar examples for any rank $r$. Compared to the only example known previously, the construction is simple and generic. We mention that the resulting graphs require a high dimension of the Hilbert space, for example, $d=1023$ for rank 5. We constructed significantly smaller examples using the disjunctive product and numerical methods. As quantifier for the strength of the SIC we use the SIC ratio, which can be computed for a given PR by solving the semidefinite optimisation problem in Eq.~\eqref{eq_Problem_to_find_optimal_weights}. In particular, the SIC ratio of our scenario excluding rank 2 has SIC ratio $1+\frac19$ and outperforms the SIC ratio of $1+\frac1{14}$ for the scenario by Toh \cite{toh2013state}.

In summary, we have shown that higher rank projectors are essential to SIC and thus to our understanding of the structure of quantum measurements. Our methods are open to extensions to the inhomogenous case and according generalisations will be subject to future research. More generally, it remains an open problem to  find advantages of SIC with high rank over SIC with rank one, for example, with respect to number of projectors in the SIC set or with respect to the violation.

\cleardoublepage
\appendix

\section{Properties of the $\AR$ graphs}\label{app-AR-properties}
Here we show that the graphs $\AR(r)$ defined in Eq.~\eqref{eq_AR_as_circulant_graphs} of the main text  have $d_\pi(\AR(r),1)= 3$, $\alpha(\AR(r))=r$, and $\chi_f(\AR(r))=3-\frac1r$.

We first show that any rank-one PR of $\AR(r)$ needs at least dimension $3$ by showing that this is already true for a subgraph $S$ of $\AR(r)$. By the definition of $\AR(r)$, every node $v$ is connected to node $v\oplus r$ and node $v\oplus(r+1)$.
Now,
\begin{align}
    \set{0,r,2r,3r,4r} = \set{0,r,2r,1,r+1} \mod{3r-1}
\end{align}
and thus, we found the cycle graph with five nodes $C_5$ as subgraph of $\AR(r)$. It is well-known \cite{klyachko2008simple} that $d_\pi(C_5,1)=3$. Conversely, consider the (highly degenerate) assignment of projectors $\Pi_v=\ket{\lfloor v/r\rfloor}\!\bra{\lfloor v/r\rfloor}$, where $\lfloor x\rfloor$ denotes the largest integer not greater than $x$. Every projector $\Pi_v$ is orthogonal to the projectors $\Pi_{v\oplus r},\dots,\Pi_{v\oplus (2r-1)}$ and $0\le \lfloor v/r \rfloor\le 2$. Thus, $d_\pi(\AR(r))\le 3$.

Next we determine the independence number of $\AR(r)$. One easily verifies that $\set{0,\dotsc,r-1}$ is an independent set and we now show that no independent set can contain more than $r$ vertices. First, due to the symmetry of the problem, without loss of generality, we can require that the largest independent set $I$ contains the vertex $v=r-1$. Secondly, $I$ must be a subset of $\set{0,1,\dotsc, 2r-2}$ since all other vertices are already connected to the vertex $v=r-1$. Let $v_0=\min I$ and $v_1=\max I$. Then $v_1-v_0< r$, because $v_0$ is connected to all vertices $v_0+r,\dotsc,v_0+2r-1$. Hence, $\abs{I}\le r$ and by virtue of the above example, $\alpha(\AR(r))=r$. The value of the fractional chromatic number follows at once by using that $\chi_f(G)=\abs{V(G)}/\alpha(G)$ holds for vertex transitive graphs, see Proposition~3.1.1 in Ref.~\onlinecite{Fractional_Graph_Theory}.

\section{Proof of Theorem~\ref{thm_alpha_and_chi_disj_prod}}
\label{proof_independence_number_of_disj_prod}
In the following we give the proof of Theorem~\ref{thm_alpha_and_chi_disj_prod}.
In order to do so, we remember that the edge set of the disjunctive product $G \lor V$ is given by
\begin{align}
    E(G \lor F) = \set{ ((u,v),({\tilde{u}},{\tilde{v}})) |  \text{  where  } (u,{\tilde{u}} ) \in E(G) \ \lor \ (v,{\tilde{v}}) \in E(F) }.
\end{align}
It follows that two vertices are not connected, if and only if both vertices of the initial graphs are not connected, that is,
\begin{align}
    \label{eq_disj.prod.condition.edgesnotconnected}
      ((u,v),(\tilde{u},\tilde{v})) \notin E(G \lor F) \Leftrightarrow (u,\tilde{u} ) \notin E(G) \ \land \ (v,\tilde{v}) \notin E(F).
\end{align}
In order to proof Theorem~\ref{thm_alpha_and_chi_disj_prod}, we will need the following Lemma.
\begin{lemma}
\label{lemma_max_ind_set_of_G_or_F}
Every maximal independent set of $G \lor F$ is a direct product of elements of a maximal independent set of $G$ and a maximal independent set of $F$.
\end{lemma}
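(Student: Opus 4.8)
The plan is to characterize the maximal independent sets of $G \lor F$ directly from the edge condition~\eqref{eq_disj.prod.condition.edgesnotconnected}, which says that $(u,v)$ and $(\tilde u,\tilde v)$ are non-adjacent in $G\lor F$ exactly when $u,\tilde u$ are non-adjacent in $G$ \emph{and} $v,\tilde v$ are non-adjacent in $F$. First I would show that any product set $A\times B$ with $A\in\mathcal I(G)$ and $B\in\mathcal I(F)$ is independent in $G\lor F$: given $(u,v),(\tilde u,\tilde v)\in A\times B$, we have $(u,\tilde u)\notin E(G)$ since $A$ is independent and $(v,\tilde v)\notin E(F)$ since $B$ is independent, so by~\eqref{eq_disj.prod.condition.edgesnotconnected} the pair is non-adjacent. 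This gives the ``$\supseteq$'' direction and shows that products of maximal independent sets are independent.

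Next I would show the converse: every \emph{maximal} independent set $S$ of $G\lor F$ has the form $A\times B$ with $A$ maximal in $G$ and $B$ maximal in $F$. Let $S$ be a maximal independent set and define the projections $A = \set{u \mid (u,v)\in S \text{ for some } v}$ and $B = \set{v \mid (u,v)\in S \text{ for some } u}$. The key claim is that $A$ is independent in $G$: if $u,u'\in A$ were adjacent in $G$, pick $v,v'$ with $(u,v),(u',v')\in S$; then~\eqref{eq_disj.prod.condition.edgesnotconnected} would force $(u,v)$ and $(u',v')$ adjacent in $G\lor F$ (adjacency holds as soon as $(u,u')\in E(G)$), contradicting independence of $S$. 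Symmetrically $B$ is independent in $F$. By the first paragraph $A\times B$ is independent, and clearly $S\subseteq A\times B$; maximality of $S$ then forces $S = A\times B$. Finally, if $A$ were not a maximal independent set of $G$, say $A\cup\set{u_0}$ is still independent, then $(A\cup\set{u_0})\times B$ is independent in $G\lor F$ and strictly contains $S$, again contradicting maximality; so $A$ is maximal, and likewise $B$.

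I do not expect a serious obstacle here, as everything reduces to unwinding the biconditional~\eqref{eq_disj.prod.condition.edgesnotconnected}. The one point requiring a little care is the claim that the projections $A$ and $B$ are themselves independent — this is where one genuinely uses that a single adjacent coordinate pair already creates an edge in the disjunctive product (which is precisely what distinguishes $\lor$ from, say, the strong or tensor products), and it is also where the word ``maximal'' enters essentially, since for a non-maximal independent set $S$ the set $A\times B$ can be strictly larger than $S$.
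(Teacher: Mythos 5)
Your proposal is correct and uses essentially the same idea as the paper's proof: both arguments rest on the biconditional in Eq.~\eqref{eq_disj.prod.condition.edgesnotconnected} to show that the product of the coordinate projections of an independent set is still independent, and then invoke maximality to conclude equality with that product. Your organisation via projections $A$ and $B$ is a somewhat cleaner packaging of the paper's ``fill in the holes of the matrix'' argument, but it is not a genuinely different route.
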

\begin{proof}
First, consider an arbitrary independent set $I$ of $G \lor F$ with elements of the form $(u,v)\in I$. We can order the elements of any arbitrary set in a matrix. This matrix will have as many rows as there are different $u$ in $I$ and as many columns as there are different $v$ in $I$. Let there be $n$ different $u$ and $m$ different $v$, then the matrix is similar to the form
\begin{align}
 \left(
\begin{array}{cccc}
(u_0,v_0)& (u_0,v_1) &  \cdots & (u_0,v_{m-1}) \\
(u_1,v_0)& (u_1,v_1) &  \cdots & (u_1,v_{m-1}) \\
\vdots & \vdots & \ddots & \vdots \\
(u_{n-1},v_0)& (u_{n-1},v_1) &  \cdots & (u_{n-1},v_{m-1}) \\
\end{array}
\right).
\end{align}
There can be holes in the matrix if $(u_i,v_j)\notin I$. However then there is some other element in row $i$ and in column $j$. It follows that $(u_i,v_j)$ can be added to $I$ as can be seen by the following explanation.

Without loss of generality we consider the case $(u_0,v_1) \in I$ and $(u_1,v_0) \in  I$, but $(u_0,v_0) \notin I$. From Eq.~\eqref{eq_disj.prod.condition.edgesnotconnected} we obtain
\begin{align}
     ((u_0,v_1),(u_1,v_0)) \notin E(G \lor F) \Leftrightarrow (u_0,u_1) \notin E(G) \ \land \ (v_0,v_1) \notin E(F).
\end{align}
This readily implies that none of $((u_0,v_0),(u_0,v_1))$, $((u_0,v_0),(u_1,v_0))$, or $((u_0,v_0),(u_1,v_1))$ can be an edge of $G \lor F$. Hence we can  add $(u_0,v_0)$ to $I$ and the resulting set is still an independent set.

Consequently, any maximal independent set $I$ is a set of the form $\{(u_i,v_j)\}_{i,j}$ and hence must be a direct product of independent sets of $G$ and $F$. Clearly, those independent sets of $G$ and $F$ must all be chosen to be maximal.
\end{proof}

Now we can proceed to proof Theorem~\ref{thm_alpha_and_chi_disj_prod}. We start with the proof for the weighted independence number. Let $\{\hat{u}\}_{\hat{u}}$ be an independent set of $G$ that fulfils $\sum_{\hat{u}} w^G_{\hat{u}} = \alpha(G,\boldsymbol{w}^G) $ and let $\{\hat{v}\}_{\hat{v}}$ be an independent set of $F$ that fulfils $\sum_{\hat{v}} w^F_{\hat{v}} = \alpha(G,\boldsymbol{w}^F)$.
By definition of the disjunctive graph product, the set $\{(\hat{u},\hat{v})\}_{\hat{u} \hat{v}}$ is an independent set. Since the weights $\{w_{\hat{u} \hat{v}}\}$ are products of the weights of $V(G)$ and $V(F)$, we have that the sum of the weights of the above independent set is
\begin{align}
    \sum_{\hat{u},\hat{v}} w_{\hat{u}\hat{v}} = \sum_{\hat{u},\hat{v}} w^G_{\hat{u}}  w^F_{\hat{v}} = \sum_{\hat{u}} w^G_{\hat{u}}  \sum_{\hat{v}} w^F_{\hat{v}} = \alpha(G,\boldsymbol{w}^G)  \alpha(F,\boldsymbol{w}^F).
\end{align}
Therefore, we found a lower bound on the weighted independence number of $G \lor F$
\begin{align}
\label{eq_disj.prod.condition.lowerboundonalpha(GorFw)}
    \alpha(G,\boldsymbol{w}^G)  \alpha(F,\boldsymbol{w}^F) \leq \alpha(G \lor F, \boldsymbol{w}).
\end{align}

Using Lemma~\ref{lemma_max_ind_set_of_G_or_F}, we can estimate the sum of the weights of any independent set of $G\lor F$.
One obtains that for every independent set $\{(u,v)\}_{u,v(u)}$ the sum of its weights is upper bounded by $\alpha(G,\boldsymbol{w}^G)  \alpha(F,\boldsymbol{w}^F)$, since
\begin{align}
\label{eq_app_weight_abschätzung}
    \sum_{u,u(v)} w_{u,v} \leq \sum_{u \in I,v \in J} w_{u,v} = \sum_{u \in I,v \in J} w^G_u  w^F_v = \sum_{u \in I} w^G_u  \sum_{v \in J} w^F_v \leq \alpha(G,\boldsymbol{w}^G)  \alpha(F,\boldsymbol{w}^F),
\end{align}
where $I$ is the independent set given by all different $u$ and $J$ is the independent set given by all different $v$.
 Since we know from Eq.~\eqref{eq_disj.prod.condition.lowerboundonalpha(GorFw)} that the independence number of $G \lor F$ is also lower bounded by $\alpha(G,\boldsymbol{w}^G)  \alpha(F,\boldsymbol{w}^F)$, we conclude that
\begin{align}
\label{eq_weighted_independence_number_of_disj._graphproduct}
     \alpha(G\lor F, \boldsymbol{w}) = \alpha(G,\boldsymbol{w}^G)  \alpha(F,\boldsymbol{w}^F).
\end{align}

Next, we consider the proof for the weighted fractional chromatic number. First, the weighted fractional chromatic number $\chi_f(G \lor F, \boldsymbol{w})$ is the solution to the problem \cite{Schrivjer_Combinatorial_Optimization_p1096/97}
\begin{align}
\label{eq_chiGorFw}
\begin{split}
\min\quad & \sum_{I} z_I \\
\text{such that} \ \ \ &\sum_{ I\ni i} z_I \geq w_i \quad \forall i \in V(G \lor F) \\
&  z_I  \geq 0,
\end{split}
\end{align}
where $I$ exhausts all of $\mathcal{I}(G \lor F)$. It is sufficient to only consider maximal independent sets as the optimal solution remains the same. This follows from the observation that in the above linear program we assign positive values $z_I$ to the independent sets, such that the sum of all sets in which a vertex $i$ is included, is greater equal the weight of this vertex. Thus, if an optimal solution is achieved by a solution that includes a non-maximal independent set $I$, the same solution is obtained by assigning $z_I$ to a maximal independent set that includes $I$ and then setting $z_I$ to zero. In the following, $\mathcal{I}_\mathrm{max}(G)$ will denote the set of maximal independent sets of $G$.

According to Lemma~\ref{lemma_max_ind_set_of_G_or_F}, every maximal independent set of $G \lor F$ is of form $I\times J$ for $I\in \mathcal I_\mathrm{max}(G)$ and $J\in \mathcal I_\mathrm{max}(F)$ and conversely $I\times J$ is clearly an independent set of $G\lor F$, that is,
\begin{equation}
 \mathcal I_\mathrm{max}(G\lor F)\subseteq \set{I\times J| I\in \mathcal I_\mathrm{max}(G),\; J\in \mathcal I_\mathrm{max}(F)}
 \subset \mathcal I(G\lor F).
\end{equation}
Additionally we use that the weight $\boldsymbol{w}$ is of product form, that is $w_{i,j} = w_i^G w_j^F$. Therefore, it follows that $\chi_f(G \lor F, \boldsymbol{w})$ is the solution to the problem
\begin{align}
\label{eq_chiGorFwrewritten2}
\begin{split}
\min  \quad & \sum_{I,J} z_{I,J} \\
\text{such that}\quad &\sum_{ I\ni u,J\ni v} z_{I,J} \geq w_u^G w_v^F \quad \forall u \in V(G), \ \forall v \in V(F)  \\
&  z_{I,J}  \geq 0.
\end{split}
\end{align}
By comparing this problem with the problems that define $\chi_f (G,\boldsymbol{w}^G)$ and $\chi_f (F,\boldsymbol{w}^F)$, we find that the product of the solutions that lead to $\chi_f(G,\boldsymbol{w}^G)$  and $\chi_f(F,\boldsymbol{w}^F)$ fulfil the conditions from Eq.~\eqref{eq_chiGorFwrewritten2}, since the solution of $\chi_f(G,\boldsymbol{w}^G)$ fulfils that $\sum_{ I\ni u} x_I \geq w_u \ \forall u \in V(G)$ and the solution of $\chi_f(F,\boldsymbol{w}^F)$ fulfils that $\sum_{ J\ni v} x_J \geq w_v \ \forall v \in V(F)$.
Therefore, we can conclude that
\begin{align}
    \label{eq_frac_chrom_number_GorF_leq_frac_G_or_frac_F}
    \chi_f(G, \boldsymbol{w}^G)   \chi_f(F, \boldsymbol{w}^F) \geq \chi_f(G \lor F,\boldsymbol{w}).
\end{align}

We obtain the other direction by considering the dual problem of the fractional chromatic number. Due to strong duality between dual linear programs, we know that $\chi_f(G \lor F, \boldsymbol{w}^{G\lor F})$ is also the solution to the dual problem. For $G\lor F$ the dual of the weighted fractional chromatic number is given by
\begin{align}
\label{eq_dual_weighted_fractional_chromatic_number_GorF}
\begin{split}
\max\quad & \sum_{i \in V(G\lor F )} w_i  z_i \\
\text{such that} \quad &\sum_{ i \in I} z_i \leq 1 \quad \forall I \in \mathcal{I}(G\lor F )\\
&  z_i  \geq 0 \quad  \forall i \in V(G\lor F )
\end{split}
\end{align}
and can be rewritten as
\begin{align}
\label{eq_dual_weighted_fractional_chromatic_number_GorF2}
\begin{split}
\max\quad & \sum_{u \in V(G),v \in V(F) } w^G_u  w^F_v  z_{u,v} \\
\text{such that} \quad &\sum_{ u \in I,v \in J} z_{u,v} \leq 1 \quad \forall I \in \mathcal{I}_\mathrm{max}(G), \ \forall J \in \mathcal{I}_\mathrm{max}(F) \\
&  z_{u,v}  \geq 0 \quad  \forall  u \in V(G), \ \forall v \in V(F),
\end{split}
\end{align}
since similar to before, the optimal solution can always be achieved by maximal independent sets. Following the same argument as before, namely that by considering the dual problem for $G$ and $F$ respectively, we can find a solution that fulfils the constraints from Eq.~\eqref{eq_dual_weighted_fractional_chromatic_number_GorF2}.  It follows that
\begin{align}
    \chi_f(G, \boldsymbol{w}^G)   \chi_f(F, \boldsymbol{w}^F) \leq \chi_f(G \lor F,\boldsymbol{w}).
\end{align}
Together with Eq.~\eqref{eq_frac_chrom_number_GorF_leq_frac_G_or_frac_F}, we conclude that
\begin{align}
      \chi_f(G \lor F,\boldsymbol{w}) = \chi_f(G, \boldsymbol{w}^G)   \chi_f(F, \boldsymbol{w}^F).
\end{align}

\section{Proof of Lemma~\ref{lem_disj.prod._of_SIC_and_odd_cyc_has_no_rep_in_dim_2_times_d}}
\label{app_odd_cycle_properties}

\begin{figure}
\centering
\includegraphics[width=0.4\linewidth]{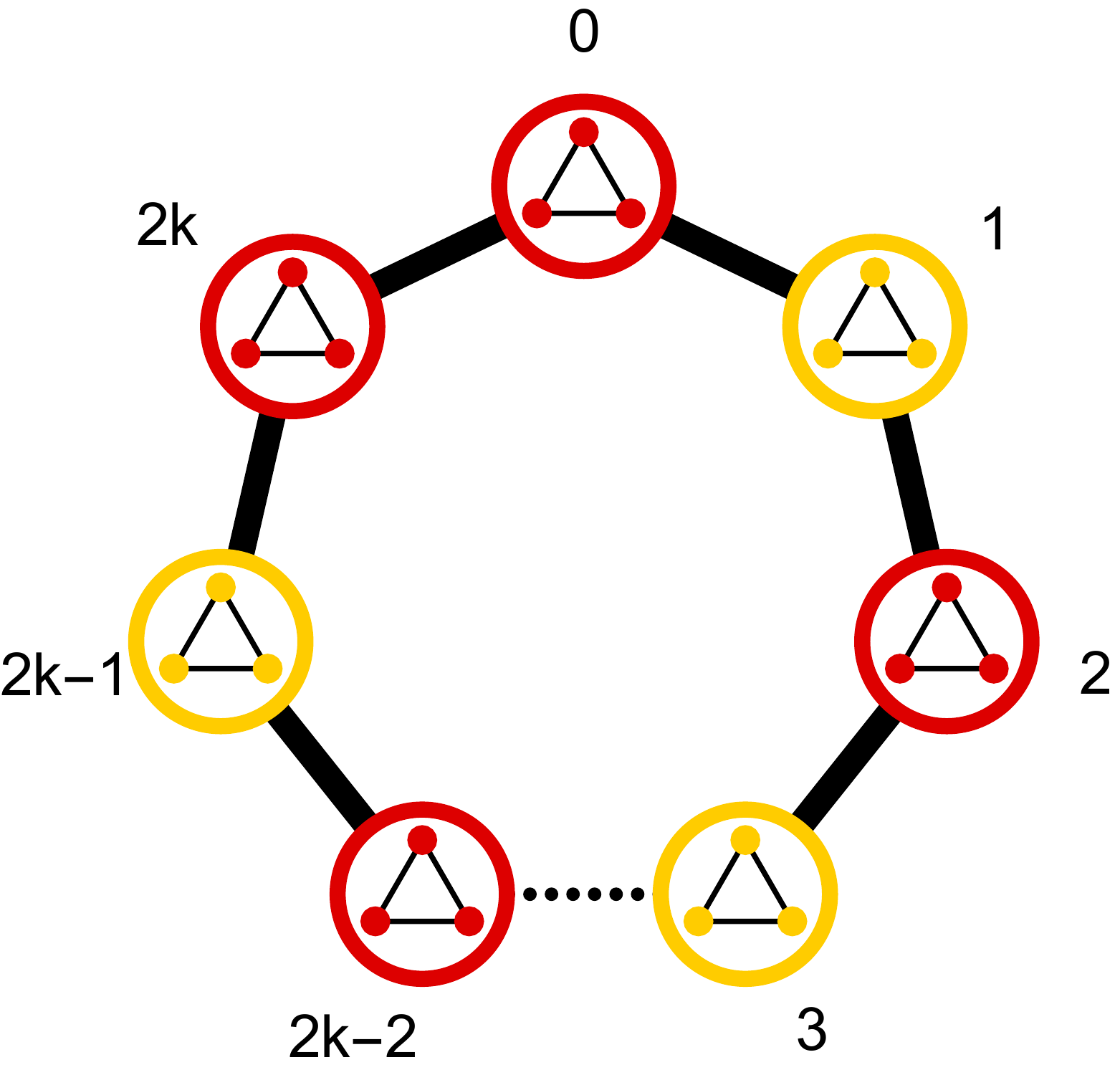}
\caption{Sketch of a subgraph of $G \lor C_{2k+1}$, where each triangle corresponds to a clique of size $d$. Every vertex $i$ is orthogonal to vertex $i\oplus 1$ and $i\oplus (-1)$, where ``$\oplus$'' denotes addition modulo $(2k+1)$. The two colours red and gold of the vertices correspond to two rank-$d$ projectors $P_0$ and $P_1$ that map into a subspace of dimension $d$, respectively and that must be orthogonal to each other. A contradiction occurs since there exists no colouring, such that the colour of every vertex is different to the colour of its neighbour. Therefore, $d(G\lor C_{2k+1}) \geq 2d+1$.}
\label{fig_Sketch_of_odd_cycle_with_SIC_graph}
\end{figure}

Let $S$ be any induced subgraph of $G$. Then $S \lor C_\ell$ is an induced subgraph of $G \lor C_\ell$. In addition, $d_{\pi}(S,r) \le d_{\pi}(G,r)$ since the smallest rank-$r$ PR of $G$ must already obey all orthogonality conditions of $S$. We choose now as induced subgraph $S$ a clique of $G$ with $d$ vertices. Then $G\lor C_\ell$ contains $(C_\ell)^d$ as subgraph due to
\begin{equation}\begin{split}
 E(G\lor C_\ell)&\supset E(S\lor C_\ell)
 =\set{ ((u,v),(\tilde u,\tilde v)) | u\ne \tilde u \text{ or } (v,\tilde v)\in E(C_\ell) }\\
 &\supset\set{((u,v),(\tilde u,\tilde v)) | (u\ne \tilde u \text{ and } v=\tilde v) \text{ or } (v,\tilde v)\in E(C_\ell) } =E((C_\ell)^d),
\end{split}\end{equation}
where the graph power on the right-hand side is defined in Ref.~\onlinecite{Schrivjer_Combinatorial_Optimization_p1096/97}, see also Eq.~(9) in Ref.~\onlinecite{State-ind-quantum-contextuality-with-projectors-of-nonunit-rank}. The resulting graph for $d=3$ is sketched in Figure~\ref{fig_Sketch_of_odd_cycle_with_SIC_graph}.

According to Theorem~1 in Ref.~\onlinecite{State-ind-quantum-contextuality-with-projectors-of-nonunit-rank} and using $((C_\ell)^d)^r=(C_\ell)^{dr}$, we have $d_{\pi}((C_\ell)^d,r) = d_{\pi}(C_\ell,dr)$.
Thus it suffices to prove that $C_\ell$ has no rank-$rd$ PR in a $2rd$-dimensional space and then choose $d=\omega(G)$. We proceed by contradiction and assume that $(P_i)_i$ is a rank-$rd$ PR of $C_\ell$. If this representation is in a $2dr$-dimensional space, we have $P_{i} + P_{i+1} = \openone$. Thus, we have $P_{i+2} = P_i$, which leads to $P_{\ell-1} = P_0$ since $\ell$ is odd. However, $P_{\ell-1}$ and $P_0$ should be orthogonal to each other, since $(0,\ell-1)$  is an edge in $C_\ell$.

\section{Graphs in {\it graph6} code}\label{appendix-graph6}
Here we provide a list of the graphs used in the manuscript in \emph{graph6} code. This code can be used in many computer software packages for graphs. Its description can be found in the software \emph{nauty} and at \url{http://cs.anu.edu.au/~bdm/data/formats.txt}.
\begin{tabbing}
$G_\mathrm{BBC}\quad$ \= \texttt{T??????wCcOcaOSGgWaWODS?IoHoH@BO\_eB?}\\
$G_\mathrm{YO}$ \> \texttt{L??G]OxhAgkOc\textasciigrave{}}\\
$G_\mathrm{H}$ \> \texttt{Qz[\textasciigrave{}MNFeCod\_K\_L?ODsAk\_KQ@H?}\\
$G_\mathrm{X}$ \> \texttt{QzrLDDBOxWD\textasciigrave{}TGUCEH@cG@T?Hc?}\\
$G_\mathrm{BBCr}$ \> \texttt{P?ACC@CCXAGPC\_H?dUAQEFB?}\\
$G_\mathrm{CEG}$ \> \texttt{QtaLc[gDBGkcUHUEG\textbackslash{}IElK\textbackslash{}OMy?}\\
\end{tabbing}

\bibliographystyle{quantum}
\bibliography{refs}

\begin{thebibliography}{10}

\bibitem{Quantum_Contextuality}
Costantino Budroni, Ad\'{a}n Cabello, Otfried G\"{u}hne, Matthias Kleinmann,
  and Jan-\AA{}ke Larsson.
\newblock ``Quantum contextuality''~(2021).
\newblock  \href{http://arxiv.org/abs/2102.13036}{arXiv:2102.13036}.

\bibitem{xu2019necessary}
Zhen-Peng Xu and Ad{\'a}n Cabello.
\newblock ``Necessary and sufficient condition for contextuality from
  incompatibility''.
\newblock \href{https://dx.doi.org/10.1103/PhysRevA.99.020103}{Phys. Rev. A
  {\bf 99}, 020103}~(2019).

\bibitem{cabello2010proposal}
Ad{\'a}n Cabello.
\newblock ``Proposal for revealing quantum nonlocality via local
  contextuality''.
\newblock \href{https://dx.doi.org/10.1103/PhysRevLett.104.220401}{Phys. Rev.
  Lett. {\bf 104}, 220401}~(2010).

\bibitem{cabello2021converting}
Ad{\'a}n Cabello.
\newblock ``Converting contextuality into nonlocality''.
\newblock \href{https://dx.doi.org/10.1103/PhysRevLett.127.070401}{Phys. Rev.
  Lett. {\bf 127}, 070401}~(2021).

\bibitem{howard2014contextuality}
Mark Howard, Joel Wallman, Victor Veitch, and Joseph Emerson.
\newblock ``Contextuality supplies the {\textquoteleft}magic{\textquoteright}
  for quantum computation''.
\newblock \href{https://dx.doi.org/10.1038/nature13460}{Nature {\bf 510},
  351}~(2014).

\bibitem{raussendorf2013contextuality}
Robert Raussendorf.
\newblock ``Contextuality in measurement-based quantum computation''.
\newblock \href{https://dx.doi.org/10.1103/PhysRevA.88.022322}{Phys. Rev. A
  {\bf 88}, 022322}~(2013).

\bibitem{cubitt2010improving}
Toby~S. Cubitt, Debbie Leung, William Matthews, and Andreas Winter.
\newblock ``Improving zero-error classical communication with entanglement''.
\newblock \href{https://dx.doi.org/10.1103/PhysRevLett.104.230503}{Phys. Rev.
  Lett. {\bf 104}, 230503}~(2010).

\bibitem{saha2019state}
Debashis Saha, Pawe{\l} Horodecki, and Marcin Paw{\l}owski.
\newblock ``State independent contextuality advances one-way communication''.
\newblock \href{https://dx.doi.org/10.1088/1367-2630/ab4149}{New J. Phys. {\bf
  21}, 093057}~(2019).

\bibitem{gupta2022quantum}
Shashank Gupta, Debashis Saha, Zhen-Peng Xu, Ad{\'{a}}n Cabello, and A.~S.
  Majumdar.
\newblock ``Quantum contextuality provides communication complexity
  advantage''~(2022).
\newblock  \href{http://arxiv.org/abs/2205.03308}{arXiv:2205.03308}.

\bibitem{miller2017universal}
Carl~A. Miller and Yaoyun Shi.
\newblock ``Universal security for randomness expansion from the spot-checking
  protocol''.
\newblock \href{https://dx.doi.org/10.1137/15M1044333}{SIAM J. Comput. {\bf
  46}, 1304}~(2017).

\bibitem{cabello2011hybrid}
Ad{\'a}n Cabello, Vincenzo D{\textquoteright}Ambrosio, Eleonora Nagali, and
  Fabio Sciarrino.
\newblock ``Hybrid ququart-encoded quantum cryptography protected by
  {Kochen--Specker} contextuality''.
\newblock \href{https://dx.doi.org/10.1103/PhysRevA.84.030302}{Phys. Rev. A
  {\bf 84}, 030302}~(2011).

\bibitem{barrett2005no}
Jonathan Barrett, Lucien Hardy, and Adrian Kent.
\newblock ``No signaling and quantum key distribution''.
\newblock \href{https://dx.doi.org/10.1103/PhysRevLett.95.010503}{Phys. Rev.
  Lett. {\bf 95}, 010503}~(2005).

\bibitem{Kochen-Specker_117_rays}
E.~Specker Simon~Kochen.
\newblock ``The problem of hidden variables in quantum mechanics''.
\newblock \href{https://dx.doi.org/10.1512/iumj.1968.17.17004}{Indiana Univ.
  Math. J. {\bf 17}, 59}~(1968).

\bibitem{yu2012state}
Sixia Yu and C.~H. Oh.
\newblock ``State-independent proof of {Kochen--Specker Theorem} with 13
  rays''.
\newblock \href{https://dx.doi.org/10.1103/PhysRevLett.108.030402}{Phys. Rev.
  Lett. {\bf {108}}, 030402}~(2012).

\bibitem{spekkens2005contextuality}
Robert~W. Spekkens.
\newblock ``Contextuality for preparations, transformations, and unsharp
  measurements''.
\newblock \href{https://dx.doi.org/10.1103/PhysRevA.71.052108}{Phys. Rev. A
  {\bf 71}, 052108}~(2005).

\bibitem{kochen1975problem}
Simon Kochen and Ernst~P. Specker.
\newblock ``The problem of hidden variables in quantum mechanics''.
\newblock In The logico-algebraic approach to quantum mechanics.
\newblock Page 293.
\newblock Springer~(1975).

\bibitem{pavivcic2019hypergraph}
Mladen Pavi{\v{c}}i{\'c}.
\newblock ``Hypergraph contextuality''.
\newblock \href{https://dx.doi.org/10.3390/e21111107}{Entropy {\bf 21},
  1107}~(2019).

\bibitem{cabello1996bell}
Ad\`{a}n Cabello, Jose~M. Estebaranz, and Guillermo Garc\`{i}a-Alcaine.
\newblock ``{Bell--Kochen--Specker} theorem: A proof with 18 vectors''.
\newblock \href{https://dx.doi.org/10.1016/0375-9601(96)00134-X}{Phys. Lett. A
  {\bf 212}, 183}~(1996).

\bibitem{kernaghan1995kochen}
Michael Kernaghan and Asher Peres.
\newblock ``{Kochen--Specker} theorem for eight-dimensional space''.
\newblock \href{https://dx.doi.org/10.1016/0375-9601(95)00012-R}{Phys. Lett. A
  {\bf 198}, 1}~(1995).

\bibitem{peres1991two}
Asher Peres.
\newblock ``Two simple proofs of the {Kochen--Specker} theorem''.
\newblock \href{https://dx.doi.org/10.1088/0305-4470/24/4/003}{J. Phys. A Math.
  Gen. {\bf 24}, L175}~(1991).

\bibitem{yu2015proof}
Xiao-Dong Yu, Yan-Qing Guo, and D.~M. Tong.
\newblock ``A proof of the {Kochen--Specker} theorem can always be converted to
  a state-independent noncontextuality inequality''.
\newblock \href{https://dx.doi.org/10.1088/1367-2630/17/9/093001}{New J. Phys.
  {\bf 17}, 093001}~(2015).

\bibitem{klyachko2008simple}
Alexander~A. Klyachko, M.~Ali Can, Sinem Binicio\u{g}lu, and Alexander~S.
  Shumovsky.
\newblock ``Simple test for hidden variables in spin-1 systems''.
\newblock \href{https://dx.doi.org/10.1103/PhysRevLett.101.020403}{Phys. Rev.
  Lett. {\bf 101}, 020403}~(2008).

\bibitem{cabello2016quantum}
Ad{\'{a}}n Cabello, Matthias Kleinmann, and Jos{\'{e}}~R. Portillo.
\newblock ``Quantum state-independent contextuality requires 13 rays''.
\newblock \href{https://dx.doi.org/10.1088/1751-8113/49/38/38lt01}{J. Phys. A
  Math. Theor. {\bf 49}, 38LT01}~(2016).

\bibitem{Proof_Cabellos_18_rays_minimal}
Zhen-Peng Xu, Jing-Ling Chen, and Otfried G\"uhne.
\newblock ``Proof of the {Peres} conjecture for contextuality''.
\newblock \href{https://dx.doi.org/10.1103/PhysRevLett.124.230401}{Phys. Rev.
  Lett. {\bf 124}, 230401}~(2020).

\bibitem{mermin1990simple}
N.~David Mermin.
\newblock ``Simple unified form for the major no-hidden-variables theorems''.
\newblock \href{https://dx.doi.org/10.1103/PhysRevLett.65.3373}{Phys. Rev.
  Lett. {\bf 65}, 3373}~(1990).

\bibitem{toh2013kochen}
S.~P. Toh.
\newblock ``{Kochen--Specker} sets with a mixture of 16 rank-1 and 14 rank-2
  projectors for a three-qubit system''.
\newblock \href{https://dx.doi.org/10.1088/0256-307x/30/10/100302}{Chin. Phys.
  Lett. {\bf 30}, 100302}~(2013).

\bibitem{toh2013state}
Sing~Poh Toh.
\newblock ``State-independent proof of {Kochen--Specker} theorem with thirty
  rank-two projectors''.
\newblock \href{https://dx.doi.org/10.1088/0256-307x/30/10/100303}{Chin. Phys.
  Lett. {\bf 30}, 100303}~(2013).

\bibitem{Mermin_1993_Peres_Mermin_Square}
N.~David Mermin.
\newblock ``Hidden variables and the two theorems of {John Bell}''.
\newblock \href{https://dx.doi.org/10.1103/RevModPhys.65.803}{Rev. Mod. Phys.
  {\bf 65}, 803}~(1993).

\bibitem{State-ind-quantum-contextuality-with-projectors-of-nonunit-rank}
Zhen-Peng Xu, Xiao-Dong Yu, and Matthias Kleinmann.
\newblock ``State-independent quantum contextuality with projectors of nonunit
  rank''.
\newblock \href{https://dx.doi.org/10.1088/1367-2630/abe6e3}{New J. Phys. {\bf
  23}, 043025}~(2021).

\bibitem{Graph-Theoretic-Approach-to-Quantum-Correlations}
Ad\'{a}n Cabello, Simone Severini, and Anderas Winter.
\newblock ``Graph-theoretic approach to quantum correlations''.
\newblock \href{https://dx.doi.org/10.1103/PhysRevLett.112.040401}{Phys. Rev.
  Lett. {\bf 112}, 040401}~(2014).

\bibitem{grotschel1984polynomial}
Martin Gr{\"o}tschel, L{\'a}szl{\'o} Lov{\'a}sz, and Alexander Schrijver.
\newblock ``Polynomial algorithms for perfect graphs''.
\newblock In North-Holland mathematics studies.
\newblock Volume~88, pages 325--356.
\newblock Elsevier~(1984).

\bibitem{GROTSCHEL_1986_Relaxation_of_vertex_packing}
M.~Gr{\"{o}}tschel, L.~Lov{\'{a}}sz, and A.~Schrijver.
\newblock ``Relaxations of vertex packing''.
\newblock \href{https://dx.doi.org/10.1016/0095-8956(86)90087-0}{J. Comb.
  Theory. Ser. B {\bf 40}, 330}~(1986).

\bibitem{Necessary_sufficient_condition_for_SIC_meas_scenarios}
Ravishankar Ramanathan and Pawel Horodecki.
\newblock ``Necessary and sufficient condition for state-independent contextual
  measurement scenarios''.
\newblock \href{https://dx.doi.org/10.1103/PhysRevLett.112.040404}{Phys. Rev.
  Lett. {\bf 112}, 040404}~(2014).

\bibitem{Necessary-and-Sufficient-Condition-for-Q-State-Independent-Contextuality}
Ad\'{a}n Cabello, Matthias Kleinmann, and Costantino Budroni.
\newblock ``Necessary and sufficient condition for quantum state-independent
  contextuality''.
\newblock \href{https://dx.doi.org/10.1103/PhysRevLett.114.250402}{Phys. Rev.
  Lett. {\bf 114}, 250402}~(2015).

\bibitem{Fractional_Graph_Theory}
Edward~R. Scheinerman and Daniel~H. Ullman.
\newblock ``Fractional graph theory: A rational approach to the theory of
  graphs''.
\newblock Wiley. New York~(1997).

\bibitem{Araujo_2013}
Mateus Ara\'ujo, Marco~T\'ulio Quintino, Costantino Budroni, Marcelo~Terra
  Cunha, and Ad\'an Cabello.
\newblock ``All noncontextuality inequalities for the $n$-cycle scenario''.
\newblock \href{https://dx.doi.org/10.1103/PhysRevA.88.022118}{Phys. Rev. A
  {\bf 88}, 022118}~(2013).

\bibitem{Schrivjer_Combinatorial_Optimization_p1096/97}
Alexander Schrijver.
\newblock ``Combinatorial optimization: Polyhedra and efficiency''.
\newblock Springer-Verlag. Berlin Heidelberg~(2004).

\bibitem{Feige_randomized_graph_products}
Uriel Feige.
\newblock ``Randomized graph products, chromatic numbers, and the {Lovasz}
  $\vartheta$-function''.
\newblock \href{https://dx.doi.org/10.1007/BF01196133}{Combinatorica {\bf 17},
  1}~(1995).

\bibitem{BBC_21_complex_vectors}
Ingemar Bengtsson, Kate Blanchfield, and Ad{\'{a}}n Cabello.
\newblock ``A {K}ochen--{S}pecker inequality from a {SIC}''.
\newblock \href{https://dx.doi.org/10.1016/j.physleta.2011.12.011}{Phys. Lett.
  A {\bf 376}, 374}~(2012).

\end{thebibliography}

\end{document}